\documentclass[tenpt]{IEEEtran}

\usepackage{soul}

\IEEEoverridecommandlockouts
\usepackage{cite}
\usepackage{amsmath,amssymb,amsfonts}
\usepackage{algorithm2e}
\RestyleAlgo{ruled}

\usepackage{graphicx}
\newcommand{\mediumstar}{\mathbin{\scalebox{0.5}{$\bigstar$}}}
\usepackage{subcaption}

\usepackage{textcomp}
\usepackage{xcolor}
\usepackage{amsthm}
\newtheorem{lemma}{Lemma}

\usepackage{mdframed}
\usepackage{enumitem}

\newcommand{\mse}{\mathbb{M}\mathbb{S}\mathbb{E}}
\def\BibTeX{{\rm B\kern-.05em{\sc i\kern-.025em b}\kern-.08em
    T\kern-.1667em\lower.7ex\hbox{E}\kern-.125emX}}
\begin{document}

\title{Decentralized Power Control for Over-the-Air Computation with Phase Noise\\

\thanks{This work was supported in part by ELLIIT, the Swedish Research Council (VR), and the Knut and Alice Wallenberg (KAW) Foundation. Martin Dahl is affiliated with the Wallenberg Autonomous Systems Program (WASP) Graduate School.}
}

\author{\IEEEauthorblockN{Martin Dahl and Erik G. Larsson}\\
\IEEEauthorblockA{\textit{Dept. of Electrical Engineering (ISY)}, \textit{Linköping University, Sweden}\\
martin.dahl@liu.se, erik.g.larsson@liu.se}
}

\maketitle

\begin{abstract}
Estimation of uplink channels is required for coherent over-the-air computation (OAC). When channel estimation is done using calibrated reciprocity, the estimates are only available locally to the devices. This poses a challenge for precoding and decoding, which cannot be coordinated centrally. To this end we use truncated channel inversion (TCI) and propose an approximate closed form solution and an exact numerical solver to optimize the TCI parameters. Importantly, we prove that the proposed TCI scheme is independent of the number of receiver antennas in terms of mean-square-error (MSE). Furthermore, our analysis reveals a clear connection between the MSE and expected aggregate phase error across devices which gives insight to the scalability of OAC. Finally, simulations with comparisons to reference methods from prior work with globally available error-free channel estimates show that proposed is close, even outperforming these references in MSE under some conditions.
\end{abstract}

\begin{IEEEkeywords}
Over-the-air computation, power control, phase noise, reciprocity
\end{IEEEkeywords}

\section{Introduction} 
Over-the-air computation (OAC) is a  method to compute the sum of values from distributed devices at a central unit, such as an access point (AP). This aggregation is useful, for example in sensor networks and for federated learning (FL), where the sum rather than individual values from devices is of interest \cite{zhu2019broadband}. With traditional digital communication, each device must transmit their values over distinct wireless resources, such as bandwidth, time and space, whereafter the central unit computes the sum. This leads to a resource use growing linearly with the number of devices. Instead, OAC exploits the superposition principle of wave propagation as transmitted values add up at the receiver, giving both communication and computation over a single resource. In principle, OAC makes latency and bandwidth independent of the number of devices. 

However, in practice the devices must precode their transmitted values with respect to their uplink channels in order for their values to add up coherently in phase at the receiver. For this coherent precoding, the instantaneous uplink channels must be estimated, which creates an overhead that scales linearly with the number of devices. For example, each device may transmit an uplink pilot to which the AP responds with the channel estimate. To reduce this overhead, non-coherent OAC has been proposed \cite{goldenbaum2013robust, csahin2023over}, relying on amplitude modulation which preserves information even without phase-coherency. Then channel estimation is not necessary. However, such non-coherent aggregation requires repeated transmissions to reduce the uncertainty from the randomly rotated transmit phases, resulting in poor energy and resource efficiency. Alternatively, for coherent OAC one may exploit channel reciprocity, which enables uplink channel estimation using a single downlink pilot from the AP. 

Unfortunately, reciprocity does not hold in practice because of transmitter and receiver hardware, unless the devices calibrate for joint reciprocity through bidirectional signaling. This signaling can be with the AP centrally or with the other devices in a decentralized manner \cite{nissel2022correctly, shepard2012argos}. Even then, all transceiver oscillators are impacted by phase noise, which can cause a drift in phase depending on whether the oscillators are free-running or locked \cite{piemontese2024discrete}, meaning calibration has to be done repeatedly which again incurs overhead. If the oscillators are sufficiently stable, then calibration can be done seldom, enabling an efficient way of achieving reciprocity and uplink channel estimates \cite{dahl2024over}. Instead of calibration, one may continuously broadcast a signal over-the-air as a common phase reference \cite{abari2015airshare}, however, this is more costly in terms of energy and wireless resources than periodic calibrations. 

When reciprocity-based channel estimation is used (with appropriate calibration), the uplink channels are only known at the devices in a decentralized fashion, not at the AP centrally. This poses an issue for optimal precoding/decoding as centralized channel knowledge is required to compensate for devices with instantaneously weak channels, which is often assumed in power control schemes for OAC \cite{cao2020optimized, xie2023optimal, chen2018uniform, tang2023miso, zhu2018mimo}. A promising decentralized power control scheme is truncated channel inversion (TCI) \cite{zhu2019broadband, cao2020optimized}, \cite{mital2022bandwidth, abrar2025biased, amiri2020federated}. However, most of these papers are in the context of FL without focus on OAC itself, where the former considers optimizing a convergence rate, the latter an estimation error. There are also schemes where the devices have no estimates but the AP estimates the channel sum, which is cheap to acquire \cite{zhai2021hybrid, becirovic2022optimal, jing2023transceiver}: these perform poorly unless the number of AP antennas is larger than the number of devices, which is demonstrated herein.  

\textbf{Contribution: }This paper aims to answer what the optimal precoder and decoder is for OAC when instantaneous channel estimates are only available at the devices locally. Specifically, we look closer at TCI  in the context of OAC with reciprocity-based channel estimation and phase noise. To compute the optimal TCI parameters we derive an approximate closed-form solution, which to the best of our knowledge has not been done previously, as well as a numerical solver. We also prove that the performance of the optimized TCI scheme is independent of the number of receiver antennas. This independence is noteworthy as non-coherent OAC, with even less channel information, does benefit from multiple antennas \cite{csahin2023over}. The proposed methods are simulated and compared to previously published methods with globally available error-free channel estimates. 

\section{System Model and Problem Formulation}
Let $K$ single-antenna devices hold data $x_k\in\mathbb{C}$ where $\mathbb{E}[|x_k|^2]=1$, $\mathbb{E}[x_k]=0$. The goal is for a central AP with $M$ antennas to acquire an estimate $\widehat{x}$ of the sum 
\begin{equation}
    x \equiv \sum_{k=1}^{K}x_k,
\end{equation}
minimizing the mean-square error (MSE) $\mse(x,\widehat{x})\equiv\mathbb{E}\left[|x-\widehat{x}|^2\right]$. To this end, each device transmits a value $b_kx_k$ where $b_k\in\mathbb{C}$ is a precoding coefficient independent of $x_k$ with power constraint $\mathbb{E}[|b_k|^2]\leq P$. 
Then, the AP receives
\begin{equation}
    \mathbf{y} = \sum_{k=1}^{K}\mathbf{h}_kb_kx_k + \mathbf{n} \in \mathbb{C}^M,
\end{equation}
where $\mathbf{h}_k\in\mathbb{C}^M$ is the channel coefficient, and  $\mathbf{n}\in\mathbb{C}^M$ zero-mean noise with identity covariance. To estimate $x$, the AP applies a linear combiner $\mathbf{a}\in\mathbb{C}^M$:
\begin{equation}
    \widehat{x} \equiv \mathbf{a}^\text{H}\mathbf{y}.
\end{equation}
The minimum MSE problem thus follows as
\begin{equation}
\begin{split}
    &\underset{\mathbf{a}, \{b_k\}}{\text{min }}\mse(x,\widehat{x})\\
    &\text{s.t. }\mathbb{E}[|b_k|^2]\leq P,
\end{split}
\end{equation}
where $\mse(x,\widehat{x})$, conditional on $\mathbf{h}_k, \mathbf{a}$ and $b_k$ is
\begin{equation}\label{eq:MSE1}
\begin{split}
    \mathbb{E}\left[|x-\widehat{x}|^2\big|\{\mathbf{h}_k, b_k\}, \mathbf{a}\right]= \sum_{k=1}^{K}|1-\mathbf{a}^\text{H}\mathbf{h}_kb_k|^2 + \mathbf{a}^\text{H}\mathbf{a}. 
\end{split}
\end{equation}

\section{Proposed Scheme}
First the devices calibrate for reciprocity, then rely on downlink pilots from the AP to estimate the channel to the AP. This means channel estimates are only available locally to the devices, and the optimal precoder and decoder must be computed in isolation by each device and the AP.
\subsection{Channel Estimation with Reciprocity Calibration}
For simplicity, consider a single antenna of the AP, say $m$. The \textit{effective} channel from device $k$ to antenna $m$ ($h_{k\rightarrow m}\equiv [\mathbf{h}_k]_m$)\footnote{$[\cdot]_m$ denotes element $m$.} is not reciprocal to the corresponding downlink $h_{m\rightarrow k}$ because of transceiver hardware. This is modeled as  $h_{k\rightarrow m} = t_kg_{k,m}r_m$ and $h_{m\rightarrow k} = t_mg_{k,m}r_k$ where $g_{k,m}\in\mathbb{C}$ is the reciprocal channel between the device and AP antenna and $t_k, r_k, t_m, r_m\in\mathbb{C}$ the transmitter and receiver gains \cite{nissel2022correctly}. To calibrate, each device $k$ transmits a pilot to the AP which measures $h_{k\rightarrow m}$ (ignoring noise). The AP responds with a precoded pilot to each device, receiving
\begin{equation}\label{eq:calibration_coeff_meas}
	y_k = \frac{h_{m\rightarrow k}}{h_{k\rightarrow m}} = \frac{r_k}{t_k}\frac{t_m}{r_m}\equiv \frac{1}{c_k}.
\end{equation}
Then, with a single broadcast pilot from the AP over a new channel instance, say $g_{k,m}'\neq g_{k,m}$, the devices estimate the new uplink $h_{k\rightarrow m}'$ by multiplying the downlink with the calibration coefficient $c_k$: $c_kh_{m\rightarrow k}' = g_{k,m}'t_k r_m = h_{k\rightarrow m}'$. The AP must apply the same procedure as above internally, with antenna $m$ as reference to calibrate its $M$ antennas\cite{shepard2012argos}. In turn, the devices only calibrate to antenna $m$. 


\subsection{Phase Drift from Wiener Phase Noise}
Relative to the amplitudes $|t_k|, |r_k|, \forall k$ and the AP hardware $t_m, r_m$ of antenna $m$, the device hardware phases $\angle t_k, \angle r_k$ will drift significantly faster with time \cite{nissel2022correctly}. This drift is primarily caused by phase noise, which for free-running oscillators can be modeled as a Wiener process \cite{piemontese2024discrete} of Gaussian increments $\phi$ to the transmit and receive chain sharing the same oscillator \cite{nissel2022correctly}: $\angle t_k' = \angle t_k + \phi_k/2$,  $\angle r_k' = \angle r_k - \phi_k/2$, $\phi_k\sim\mathcal{N}(0,\beta)$, resulting in the estimate 
\begin{equation}\label{eq:reciprocity_channel_estimation_phase_error}
	\widehat{h_{k\rightarrow m}'} = h_{k\rightarrow m}'e^{j\phi_k}.
\end{equation}
After a time $T$ since calibration the phase noise will accumulate to $\phi_k\sim\mathcal{N}(0,T\beta)$ such that (\ref{eq:reciprocity_channel_estimation_phase_error}) becomes increasingly uncertain with time. The phase noise variance $\beta\in\mathbb{R}$ corresponds to the quality of an oscillator where lower $\beta$ is better.

\subsection{Proposed Precoder and Decoder}
Given a decoder $\mathbf{a}$, the MSE (\ref{eq:MSE1}) is minimized by 
\begin{equation}
    b_k^{\mediumstar} = \frac{\mathbf{h}_k^\text{H}\mathbf{a}}{|\mathbf{h}_k^\text{H}\mathbf{a}|^2},
\end{equation}
however, determining the optimal $\mathbf{a}$ under a power constraint requires global instantaneous knowledge of $\mathbf{h}_k$. To handle the power constraint our proposed decentralized scheme is an heuristic combination of $b_k^{\mediumstar}$ and TCI:
\begin{equation}\label{eq:proposed_scheme}
\begin{split}
    &\mathbf{a} = \sqrt{\eta}\mathbf{1},\\
    &b_k = \begin{cases}
            \frac{\widehat{\mathbf{h}_k}^\text{H}\mathbf{a}}{\big|\widehat{\mathbf{h}_k}^\text{H}\mathbf{a}\big|^2} & \text{if }\big|\widehat{\mathbf{h}_k}^\text{H}\mathbf{a}\big|\geq\gamma,\\
            0 & \text{else},
            \end{cases}
\end{split}
\end{equation}
where $\widehat{\mathbf{h}_k}$ is an estimate of $\mathbf{h}_k$ available to device $k$ only, $\gamma$ and $\eta$ are positive constants. The estimate is then
\begin{equation}
    \widehat{x} = \sum_{k=1}^{K}\mathbf{1}^T\mathbf{h}_k\frac{\widehat{\mathbf{h}_k}^\text{H}\mathbf{1}}{\big|\widehat{\mathbf{h}_k}^\text{H}\mathbf{1}\big|^2}x_kI_k + \sqrt{\eta}\mathbf{1}^T\mathbf{n},
\end{equation}
where $I_k=1$ if $\big|\widehat{\mathbf{h}_k}^\text{H}\mathbf{a}\big|\geq\gamma$, else $0$. The proposed TCI scheme is the first to consider multiple antennas at the receiver, but is otherwise similar to previous work.

\subsection{Minimum MSE Optimization Objective}
Let $\mathbf{h}_k\sim\mathcal{CN}(0,\sigma^2\mathbf{I})$, assuming homogeneous channel powers\footnote{This is commonly assumed by previous work and simplifies the analysis. The heterogeneous case will primarily differ in that devices should have different truncation thresholds, increasing the optimization complexity.} giving a transmission probability of $P\left(I_k=1\right)  = e^{-\frac{\gamma^2}{\eta M\sigma^2}}$. Furthermore, let $\widehat{\mathbf{h}_k} = \mathbf{h}_ke^{j\phi_k}$ for some phase error $\phi_k\in\mathbb{R}$. Assume $\phi_k, \forall k$ are independent and let $\alpha_k\equiv\mathbb{E}[e^{j\phi_k}]$ such that $\alpha_k\in\mathbb{R}$. Then $\widehat{x}$ is given as follows
\begin{equation}
    \widehat{x} = \sum_{k=1}^{K}x_ke^{-j\phi_k}I_k + \sqrt{\eta}\mathbf{1}^T\mathbf{n},
\end{equation}
with the $\mse(x,\widehat{x})$ as a function of $\eta$ and $\gamma$
\begin{equation}\label{eq:MSE3}
\begin{split}
    \mse(\eta, \gamma) \equiv K + e^{-\frac{\gamma^2}{\eta M\sigma^2}}\Sigma + \eta M,
\end{split}
\end{equation}
where $\Sigma \equiv \sum_{k=1}^{K}\left(1-2\alpha_k\right)$. Here $\Sigma$ is a quantity describing the expected aggregate phase error which should be as low as possible. We want to optimize $\eta, \gamma$ by solving

\begin{equation}\label{eq:MSE_objective_1}
\begin{split}
    &\eta^{\mediumstar}, \gamma^{\mediumstar} =\underset{\eta, \gamma}{\text{ arg min }}\mse(\eta, \gamma)\\
    &\text{s.t. }\mathbb{E}[|b_k|^2]\leq P,\gamma,\eta > 0.
\end{split}
\end{equation}

\section{Minimizing the MSE}
First consider the case of $\Sigma\geq 0$. Then $\mse(\eta, \gamma)> K$ which is worse than $\widehat{x}=0$ with $\mse(x,\widehat{x})=K$ and the proposed TCI scheme should not be used. This makes $\Sigma=0$ a useful threshold to determine when devices are too out of phase in expectation, which should trigger a re-calibration \cite{dahl2024over}. Now consider $\Sigma<0$ and $\mathbf{h}_k^\text{H}\mathbf{1}\equiv\widetilde{h}_k\sqrt{\frac{M\sigma^2}{2}}$ where $\widetilde{h}_k\sim\mathcal{CN}(0,2)$ such that $|\widetilde{h}_k|^2\sim\text{exp}(1/2)$, then

the exact second moment of $b_k$ is
\begin{equation}\label{eq:closed_form_constraint}
\begin{split}
    &\mathbb{E}[|b_k|^2] = \int_{\frac{2\gamma^2}{\eta M\sigma^2}}^{\infty}\frac{2}{\eta M\sigma^2x}\frac{1}{2}e^{-\frac{x}{2}}dx=\frac{\textit{E}_1\left(\frac{\gamma^2}{\eta M\sigma^2}\right)}{\eta M\sigma^2},
\end{split}
\end{equation}
where $E_1$ is the \textit{exponential integral} \cite{abramowitz1965handbook}. 
\begin{lemma}\label{lemma_etaM}
    With the proposed TCI scheme (\ref{eq:proposed_scheme}), the minimum $\mse(\eta,\gamma)$ in (\ref{eq:MSE_objective_1}) using $\eta^{\mediumstar}$ and $\gamma^{\mediumstar}$ is independent of $M$.
\end{lemma}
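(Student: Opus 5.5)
The plan is a change of variables that absorbs $M$. Set $\tilde\eta \equiv \eta M$ and $\tilde\gamma \equiv \gamma^2/M$, or equivalently work with $u \equiv \gamma^2/(\eta M\sigma^2)$ together with $\tilde\eta$. In these variables the objective (\ref{eq:MSE3}) becomes
\begin{equation*}
\mse = K + e^{-u}\Sigma + \tilde\eta,
\end{equation*}
and the power constraint from (\ref{eq:closed_form_constraint}) becomes $E_1(u)/(\tilde\eta\sigma^2)\leq P$. Neither expression contains $M$. Here $\Sigma$ depends only on the phase errors $\alpha_k$, which are independent of $M$ because $\widehat{\mathbf{h}_k}=\mathbf{h}_ke^{j\phi_k}$ carries a single phase per device.

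The first step is to check that the map $(\eta,\gamma)\mapsto(\tilde\eta,u)$ is a bijection from $(0,\infty)^2$ onto $(0,\infty)^2$ for every fixed $M\geq 1$. The inverse is $\eta=\tilde\eta/M$ and $\gamma=\sqrt{u\,\tilde\eta\,\sigma^2}$, so the constraint $\gamma,\eta>0$ corresponds exactly to $\tilde\eta,u>0$. The second step follows directly: the feasible set and the objective in the new variables do not depend on $M$, so the infimum, and any minimizer $(\tilde\eta^{\mediumstar},u^{\mediumstar})$, is the same for all $M$. Mapping back gives $\eta^{\mediumstar}=\tilde\eta^{\mediumstar}/M$ and $\gamma^{\mediumstar}=\sqrt{u^{\mediumstar}\tilde\eta^{\mediumstar}\sigma^2}$, which do depend on $M$, while the optimal MSE value does not.

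The main obstacle is making sure the rewriting of the constraint is legitimate. It relies on (\ref{eq:closed_form_constraint}), which uses that $\mathbf{h}_k^{\text{H}}\mathbf{1}\sim\mathcal{CN}(0,M\sigma^2)$, so that all $M$-dependence enters through the product $\eta M\sigma^2$. This holds because $\mathbf{h}_k\sim\mathcal{CN}(0,\sigma^2\mathbf{I})$, and the phase rotation $e^{j\phi_k}$ does not change $|\widehat{\mathbf{h}_k}^{\text{H}}\mathbf{a}|$. One should also remark that when $\Sigma\geq 0$ the infimum equals $K$ and is approached as $\tilde\eta\to 0$ with $u\to\infty$. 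This is again $M$-independent, so the claim holds in that case as well, interpreting the minimum as an infimum.
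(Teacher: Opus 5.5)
Your proposal is correct and is essentially the paper's proof. The paper just substitutes $\mu=\eta M>0$ and keeps $\gamma$ as is, which already removes $M$ from both (\ref{eq:MSE3}) and (\ref{eq:closed_form_constraint}); your extra normalization $u=\gamma^2/(\eta M\sigma^2)$, the explicit bijection check, and the $\Sigma\geq 0$ infimum remark are harmless refinements. One small slip: the first pair you mention, $(\eta M,\gamma^2/M)$, is not equivalent to $(\eta M,u)$, because in those variables the exponent $\gamma^2/(\eta M\sigma^2)$ becomes $M\cdot(\gamma^2/M)/((\eta M)\sigma^2)$ and still contains $M$, but you never actually use that pair.
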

\begin{proof}
    Note that $\eta$ and $M$ always appear as $\eta M$ in the $\mse(\eta, \gamma)$ and the closed form constraint (\ref{eq:closed_form_constraint}). Since both $\eta$, $M$ are positive, we can substitute $\eta M$ with $\mu>0$ and instead minimize $\mse(\eta, \gamma)$ w.r.t. $\gamma$, $\mu$, independent of $M$. 
\end{proof}
As a consequence of Lemma \ref{lemma_etaM} and the distribution of $\mathbf{h}$ being statistically white, the MSE only depends on  $||\mathbf{a}||$. For example $\mathbf{a}=\sqrt{\eta}[1,0,\dots,0]^\text{T}$ yields equal performance as $\mathbf{a}=\sqrt{\eta}\mathbf{1}$. Note that calibration of active AP antennas (non-zero elements of $\mathbf{a}$) is still necessary.  
\subsection{Approximate Closed-Form Solution}
For $x>0$ we have \cite{abramowitz1965handbook} 
\begin{equation}
    \textit{E}_1(x) \leq e^{-x}\text{ln}\left(1 + \frac{1}{x}\right)\leq \frac{e^{-x}}{\sqrt{x}},
\end{equation}
where the final inequality follows from $\text{ln}(1+x)\leq\sqrt{x}$ for $x\geq 0$.  Inserted into (\ref{eq:closed_form_constraint}) this inequality yields
\begin{equation}\label{eq:power_upper_bound}
\begin{split}
    &\mathbb{E}[|b_k|^2] \leq \frac{e^{-\frac{\gamma^2}{\eta M\sigma^2}}}{\eta M\sigma^2\sqrt{\frac{\gamma^2}{\eta M\sigma^2}}} = \frac{e^{-\frac{\gamma^2}{\eta M\sigma^2}}}{\sqrt{\sigma^2\gamma^2\eta M}}.
\end{split}
\end{equation}
By enforcing maximum power at the upper bound
\begin{equation}\label{eq:approx_power_constraint}
\begin{split}
    \frac{e^{-\frac{\gamma^2}{\eta M\sigma^2}}}{\sqrt{\sigma^2\gamma^2\eta M}} = P
    \Leftrightarrow
    e^{-\frac{\gamma^2}{\eta M\sigma^2}} = P\sqrt{\sigma^2\gamma^2\eta M},
\end{split}
\end{equation}
a solution to $\gamma^2$ is given by
\begin{equation}\label{eq:exact_gamma2}
    \gamma^2 = \frac{\eta M\sigma^2}{2}\text{W}\left(\frac{2}{P^2\eta^2\sigma^4 M^2}\right)\equiv \widetilde{\gamma}^2,
\end{equation}
where W is the principal branch of the \textit{Lambert W} function\footnote{$\text{W}(xe^x)=x$, $x>-1/e$.} \cite{corless1996lambert}. Note that if $\eta$ is fixed, then $\gamma^{\mediumstar}\leq\widetilde{\gamma}$ for that $\eta$ since (\ref{eq:closed_form_constraint}) and (\ref{eq:power_upper_bound}) decrease monotonously with increasing $\gamma>0$. As the argument of $\text{W}(\cdot)$ in (\ref{eq:exact_gamma2}) is positive we have\footnote{Is easiest shown by plotting $\text{W}(\cdot)$ against the upper bound.} 
\begin{equation}\label{eq:gamma2_upperbound}
\begin{split}
    &\gamma^2 \leq \frac{\eta M\sigma^2}{2}\text{ln}\left(1+\frac{2}{P^2\eta^2\sigma^4 M^2}\right).
\end{split}
\end{equation}
 If (\ref{eq:approx_power_constraint}) is enforced, then (\ref{eq:MSE3}) and (\ref{eq:gamma2_upperbound}) yields 
\begin{equation}
\begin{split}
    &\mse(\eta, \gamma) = K + \Sigma P\sqrt{\sigma^2\gamma^2\eta M} + \eta M\\
    &\leq K + \Sigma \frac{P\sigma^2\eta M}{\sqrt{2}}\sqrt{\text{ln}\left(1+ \frac{2}{P^2\eta^2\sigma^4 M^2}\right)} + \eta M\\
    &\approx K + \Sigma \frac{P\sigma^2\eta M}{\sqrt{2}}\text{ln}\left(1+ \frac{\sqrt{2}}{P\eta\sigma^2 M}\right) + \eta M,
\end{split}
\end{equation}
which for $\Sigma<0$ is minimized by
\begin{equation}\label{eq:approx_eta}
\begin{split}
    &\eta = \frac{\sqrt{2}}{P\sigma^2 M}\left(\frac{1}{1 + \text{W}\left(-e^{\frac{\sqrt{2}}{\Sigma P\sigma^2} - 1}\right)} - 1\right)\equiv\widetilde{\eta}.
\end{split}
\end{equation}
Note that $\text{W}\left(-e^{\frac{\sqrt{2}}{\Sigma P\sigma^2} - 1}\right)$ is negative and greater than $-1$, so $\widetilde{\eta}$ is positive. Finally, the closed-form solution follows below:
\begin{mdframed}
\textbf{Approximate Closed-Form Solution}
\begin{equation*}
    \begin{split}
        &\eta^{\mediumstar} \overset{(\ref{eq:approx_eta})}{=} \widetilde{\eta}, \text{ } \gamma^{\mediumstar} = \frac{\widetilde{\eta} M\sigma^2}{2}\text{W}\left(\frac{2}{P^2\widetilde{\eta}^2\sigma^4 M^2}\right).
    \end{split}
\end{equation*}
\end{mdframed}
\subsection{Exact Grid-Search Solution}
To solve (\ref{eq:MSE_objective_1}), $\eta^{\mediumstar}, \gamma^{\mediumstar}$ are found through grid- and bisection-search, guided by the upper bounds in Lemma \ref{lemma_bound}:
\begin{lemma}\label{lemma_bound}
    \begin{equation}
    \begin{split}
       & \eta^{\mediumstar} \leq \frac{L}{M},\text{ }\gamma^{\mediumstar} \leq \frac{1}{\sqrt{P\sqrt{2}}},
    \end{split}
    \end{equation}
    where $L\equiv\mse(\widetilde{\eta}, \widetilde{\gamma})$.
\end{lemma}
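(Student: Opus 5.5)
The plan is to treat the two bounds separately. For $\eta^{\mediumstar}$ we compare the optimal MSE with the MSE of a feasible candidate. For $\gamma^{\mediumstar}$ we use only the power constraint (\ref{eq:closed_form_constraint}) together with the exponential-integral bound already used in (\ref{eq:power_upper_bound}). Throughout, $\Sigma<0$ as assumed, and following Lemma \ref{lemma_etaM} we write $\mu\equiv\eta M$.

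\textbf{Bound on $\eta^{\mediumstar}$.} First I check that $(\widetilde{\eta},\widetilde{\gamma})$ is feasible. By construction $\widetilde{\gamma}$ solves (\ref{eq:approx_power_constraint}), so the upper bound (\ref{eq:power_upper_bound}) on $\mathbb{E}[|b_k|^2]$ equals $P$. Hence the exact power (\ref{eq:closed_form_constraint}) is at most $P$, and the optimality of $(\eta^{\mediumstar},\gamma^{\mediumstar})$ gives $\mse(\eta^{\mediumstar},\gamma^{\mediumstar})\leq L$.

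Next I show $K+e^{-\gamma^2/(\eta M\sigma^2)}\Sigma\geq 0$ for every $\eta,\gamma$. Since $\alpha_k=\mathbb{E}[e^{j\phi_k}]$ is real with $|\alpha_k|\leq 1$, we have $\Sigma=K-2\sum_k\alpha_k\geq -K$. Because $\Sigma<0$ and the exponential lies in $(0,1]$, it follows that $e^{-\gamma^2/(\eta M\sigma^2)}\Sigma\geq\Sigma\geq -K$.

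Combining these with (\ref{eq:MSE3}) gives
\begin{equation*}
\eta^{\mediumstar}M\leq K+e^{-\frac{(\gamma^{\mediumstar})^2}{\eta^{\mediumstar}M\sigma^2}}\Sigma+\eta^{\mediumstar}M=\mse(\eta^{\mediumstar},\gamma^{\mediumstar})\leq L,
\end{equation*}
which is the first claim.

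\textbf{Bound on $\gamma^{\mediumstar}$.} Let $t\equiv\gamma^2/(\mu\sigma^2)>0$, so that $\mu\sigma^2=\gamma^2/t$. Feasibility in (\ref{eq:closed_form_constraint}) then reads $t\,E_1(t)/\gamma^2\leq P$, that is,
\begin{equation*}
\gamma^2\leq \frac{t\,E_1(t)}{P}.
\end{equation*}
It therefore suffices to bound $tE_1(t)$ uniformly in $t$. Using the bound $E_1(t)\leq e^{-t}/\sqrt{t}$ from the closed-form subsection, we get $tE_1(t)\leq\sqrt{t}\,e^{-t}$. Elementary calculus shows $\sqrt{t}\,e^{-t}$ is maximized at $t=1/2$, with value $e^{-1/2}/\sqrt{2}\leq 1/\sqrt{2}$. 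Hence $(\gamma^{\mediumstar})^2\leq 1/(P\sqrt{2})$, which is the second claim (and in fact slightly stronger).

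\textbf{Main obstacle.} I expect the delicate step to be the sign argument $K+e^{-(\cdot)}\Sigma\geq 0$ in the first part. It needs $\alpha_k\leq 1$ and $\Sigma<0$, and without it the optimal MSE would not control $\eta^{\mediumstar}M$. The second part is routine once the variable change $t=\gamma^2/(\mu\sigma^2)$ is made, because it removes $\mu$ from the problem entirely.
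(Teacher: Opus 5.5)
Your bound on $\eta^{\mediumstar}$ is correct and follows the paper's argument. You show $(\widetilde{\eta},\widetilde{\gamma})$ is feasible via (\ref{eq:power_upper_bound})--(\ref{eq:exact_gamma2}), so $L$ dominates the optimal MSE. You then use $\mse(\eta,\gamma)\geq\eta M$ because $e^{-\gamma^2/(\eta M\sigma^2)}\Sigma\in[-K,0]$.

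The gap is in the bound on $\gamma^{\mediumstar}$. From $tE_1(t)/\gamma^2\leq P$ one gets $\gamma^2\geq tE_1(t)/P$, not $\gamma^2\leq tE_1(t)/P$. The power constraint is a \emph{lower} bound on the threshold: raising $\gamma$ only lowers $\mathbb{E}[|b_k|^2]$, which tends to $0$ as $\gamma\to\infty$. So every sufficiently large $\gamma$ is feasible, and feasibility alone can never give an upper bound on $\gamma^{\mediumstar}$.

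The missing ingredient is optimality, namely that the power constraint is active at the optimum. For fixed $\mu=\eta M$ and $\Sigma<0$, (\ref{eq:MSE3}) is strictly increasing in $\gamma$. Meanwhile (\ref{eq:closed_form_constraint}) is continuous and strictly decreasing in $\gamma$, from $+\infty$ (since $E_1(t)\to\infty$ as $t\to 0$) down to $0$. Hence, for each fixed $\mu$, the minimizing $\gamma$ is the unique one with $\mathbb{E}[|b_k|^2]=P$, and so $(\gamma^{\mediumstar})^2=t^{\mediumstar}E_1(t^{\mediumstar})/P$ holds with equality.

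With that equality, your estimate $tE_1(t)\leq\sqrt{t}\,e^{-t}\leq e^{-1/2}/\sqrt{2}$ does finish the proof, with a slightly better constant. The paper uses the same active-constraint fact in the form $\gamma^{\mediumstar}\leq\widetilde{\gamma}$ for fixed $\eta$: the smallest feasible $\gamma$ lies below the feasible $\widetilde{\gamma}$. It then bounds $\widetilde{\gamma}^2$ uniformly in $\eta$ via $\text{W}(x)\leq\ln(1+x)\leq\sqrt{x}$ in (\ref{eq:gamma2_upperbound}). Your repaired route avoids the Lambert W function, but it cannot avoid the monotonicity-in-$\gamma$ step.
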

\begin{proof}
    Using $\widetilde{\eta}$ and $\widetilde{\gamma}$ satisfies $\mathbb{E}[|b_k|^2]\leq P$, which with $\widetilde{\gamma}$ follows from (\ref{eq:power_upper_bound})-(\ref{eq:exact_gamma2}) for any $\eta$. Then $L$ is an upper bound to $\mse(\eta^{\mediumstar}, \gamma^{\mediumstar})$. Furthermore, from $(\ref{eq:MSE3})$ we get that $\mse(\eta, \gamma)\geq\eta M$ since $e^{-\frac{\gamma^2}{\eta M\sigma^2}}\Sigma\in[-K,0]$. Then $\eta^{\mediumstar} M \leq L$ which proves the first bound. The second bound can be shown using (\ref{eq:gamma2_upperbound}) with $\text{ln}(1+x)\leq \sqrt{x}$ for $x>0$.
\end{proof}
The grid-bisection-search is performed as follows, where $N_\eta$ is the number of $\eta$ values and $\eta_\text{min}$ is the smallest value in the $\eta$-grid:
\begin{samepage}\begin{mdframed}
\textbf{Grid-Bisection-Search (GS) Solution}
\begin{enumerate}[label=\arabic*., leftmargin=*, itemsep=0.3em]
  \item Let $\mathcal{N}$ be a range of $N_\eta$ $\eta$ values evenly spaced from $\eta_\text{min}$ to $\frac{L}{M}$.
  \item For each $\eta\in\mathcal{N}$, find $\gamma\equiv\gamma_\eta^{\mediumstar}$ such that \\$|\mathbb{E}[|b_k|^2]-P|<\epsilon$ for $\eta>0$, using bisection-search starting at $\gamma\in\{0,\widetilde{\gamma}\}$, where $\widetilde{\gamma}$ is computed using (\ref{eq:exact_gamma2}). The power $\mathbb{E}[|b_k|^2]$ is computed using (\ref{eq:closed_form_constraint}). The number of bisection iterations is capped to 200.
  \item Out of all pairs $(\eta\in\mathcal{N}, \gamma^{\mediumstar}_\eta)$ select the pair minimizing (\ref{eq:MSE3}) as $\eta^{\mediumstar}, \gamma^{\mediumstar}$.
\end{enumerate}
\end{mdframed}
\end{samepage}
While the GS solution is more accurate than the approximate closed-form solution, the closed-form solution is more interpretable, enabling a direct analytical study of the TCI scheme. If system parameters change rapidly, the decreased computational complexity of the closed-form solution is useful. 

\section{Statistical Phase-Error limit}
Now consider the behavior of $\Sigma$ under the effect of Wiener phase noise. Furthermore, consider $T$ sequential rounds of OAC indexed by $t$ with data $x_k^t$. Assume $\phi_k\sim\mathcal{N}(0,\Phi_k)$ for some $\Phi_k>0$, then $e^{j\phi_k}$ is log-normal such that $\alpha_k = e^{-\frac{1}{2}\Phi_k}$ \cite{johnson1995continuous}. Let devices calibrate sequentially on the same sub-carrier starting at $t=-K+1$. Then $\Phi_k$ is time-dependent as $\Phi_k^t = (t+k - 1)\beta$ and $\Sigma$ is a geometric sum
which is negative if
\begin{equation}\label{eq:phase_noise_inequality}
t <-\frac{2}{\beta}\text{ln}\left(\frac{K}{2}\left(\frac{1-e^{\frac{-\beta}{2}}}{1 - e^{\frac{-\beta K}{2}}}\right)\right)\equiv T.
\end{equation}
That is, phase noise limits either the number of OAC rounds $T$, or how many devices can participate for any $T$. Additionally, the maximum $\beta$ can be found for any $K$ by solving for $T=0$, which gives insight to how stable oscillators must be for OAC. This is demonstrated by Fig. \ref{fig:sigma_boundary}. Finally, the case of $K=1$ can be interpreted as having each device calibrate on separate sub-carriers simultaneously. While sequential calibration induces a resource growth with $K$, if the oscillators are stable (low $\beta$), calibration can be done seldom (high $T$) compared to feedback-based channel estimation in every coherence block.

\begin{figure}[t!]
    \centering
    \begin{subfigure}{\linewidth}
        \centering
        \includegraphics[width=0.8\linewidth]{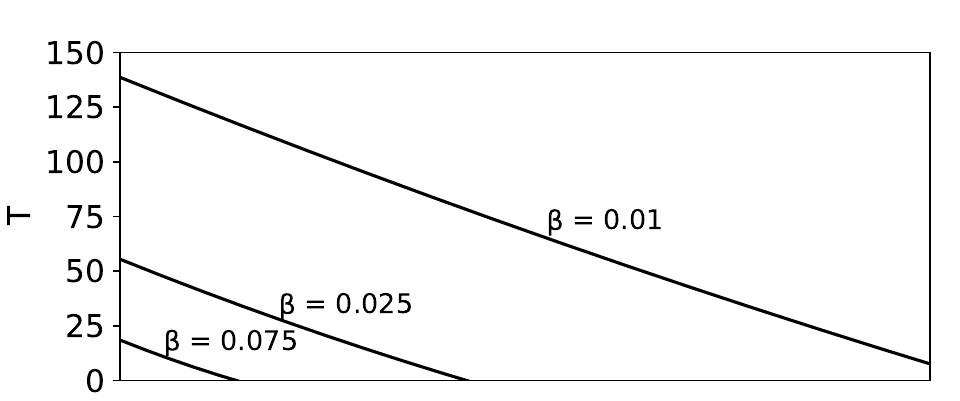}
    \end{subfigure}
    \begin{subfigure}{\linewidth}
        \centering
        \includegraphics[width=0.8\linewidth]{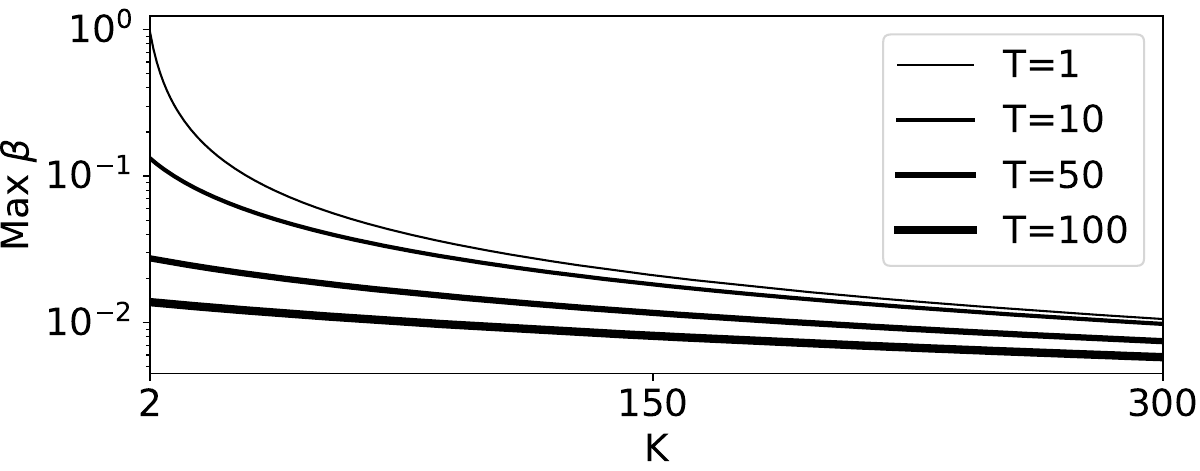}
    \end{subfigure}
    \caption{The upper figure shows $T$ against $K$ for various $\beta$. The lower figure shows the maximum $\beta$ such that $t$ fulfills (\ref{eq:phase_noise_inequality}) up to the corresponding $T$ specified by the legend.}
    \label{fig:sigma_boundary}
\end{figure}

\section{Simulations}
Through simulation, the proposed TCI scheme with solutions are compared to reference methods from prior work with error-free channel estimates globally available to all devices and the AP:
\begin{enumerate}
    \item \textbf{SUM}: $\mathbf{a}$ is selected as equation (36) from \cite{zhai2021hybrid}, a sum-combiner, while $b_k$ is selected as $\sqrt{P}\frac{(\mathbf{1}^T\mathbf{h}_k)^*}{|\mathbf{1}^T\mathbf{h}_k|}$.
    \item \textbf{PGDM}: Corresponds to the ``PGDM" method from \cite{jing2023transceiver} with step-size $1$, momentum $1$ and $100$ steps.
\end{enumerate}
  For all settings, the MSEs of these references are simulated using $10^5$ Monte Carlo trials, where the channel, noise ($\mathbf{n}\sim\mathcal{CN}(\mathbf{0},\mathbf{I})$) and data ($x_k\sim\mathcal{CN}(0,1)$, i.i.d. $\forall k$) is drawn in every trial. Our proposed ``TCI Approx" corresponds to using the approximate closed-form solution, while ``TCI GS" corresponds to using the grid-bisection-search, where the MSE is computed using (\ref{eq:MSE3}). For TCI GS we use $N_\eta=2000, \eta_\text{min}=10^{-5}$, $\epsilon=10^{-5}$.

In Fig. \ref{fig:MSE_antennas} the references are compared to proposed for various number of receiver antennas. These results apply for $\Sigma=-K$, corresponding to devices being perfectly aligned in phase, which is a best-case scenario. Firstly, the MSE is indeed independent of $M$ for the proposed method. Secondly, there is a drop in MSE when $M\geq K$ for the references, demonstrating that $M$ needs to be sufficiently large.  This drop is most visible with a large $P$, and in that case proposed even outperforms the references for $M<K$. This highlights the beneficial scaling with $K$ using the TCI scheme, compared to the reference methods. While the MSE is relatively high ($\sim K/10$) for low $P$, it decreases fast with $P$ and $\sigma^2$:

In Fig. \ref{fig:MSE_power} and \ref{fig:MSE_sigma2} the MSE is simulated and computed for various $P$ and $\sigma^2$, respectively. For large $P$ one observes that PGDM fails, which is reasonably due to numerical instability and difficulties tuning the PGDM hyperparameters. Otherwise, the TCI GS appears to perform similarly to the references for large $P$ and $\sigma^2$.

Fig. \ref{fig:MSE_SIGMA} shows how a larger $\Sigma$ increases the MSE. Importantly, for $P=10^5$ one gets a best-case scenario, which appears to be a linear MSE growth with $\Sigma$ starting at $\mse(\eta,\gamma)=0$. 

\begin{figure}[t!]
    \centering
    \begin{subfigure}{\linewidth}
        \centering
        \includegraphics[width=0.8\linewidth]{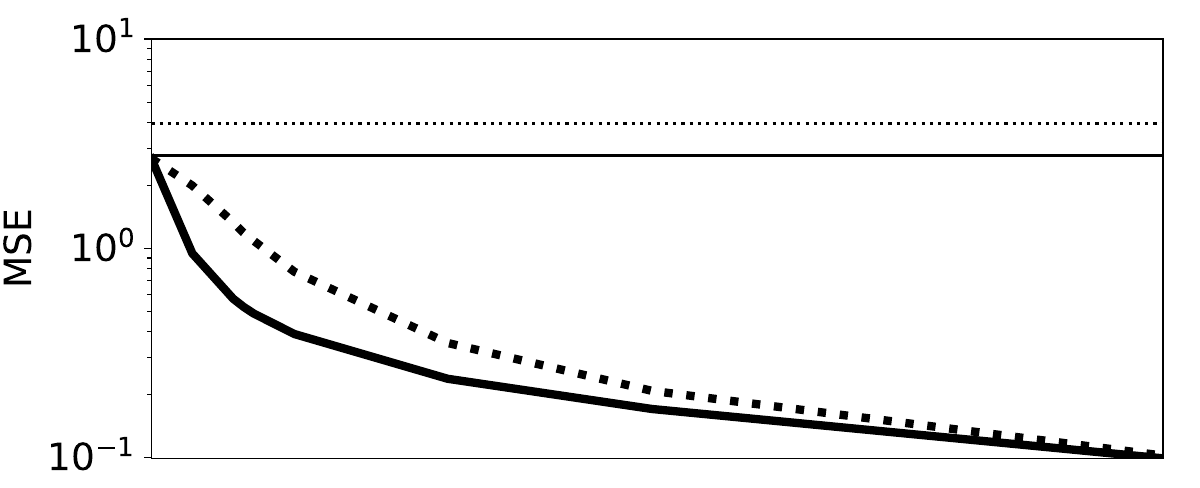}
    \end{subfigure}
    \begin{subfigure}{\linewidth}
        \centering
        \includegraphics[width=0.8\linewidth]{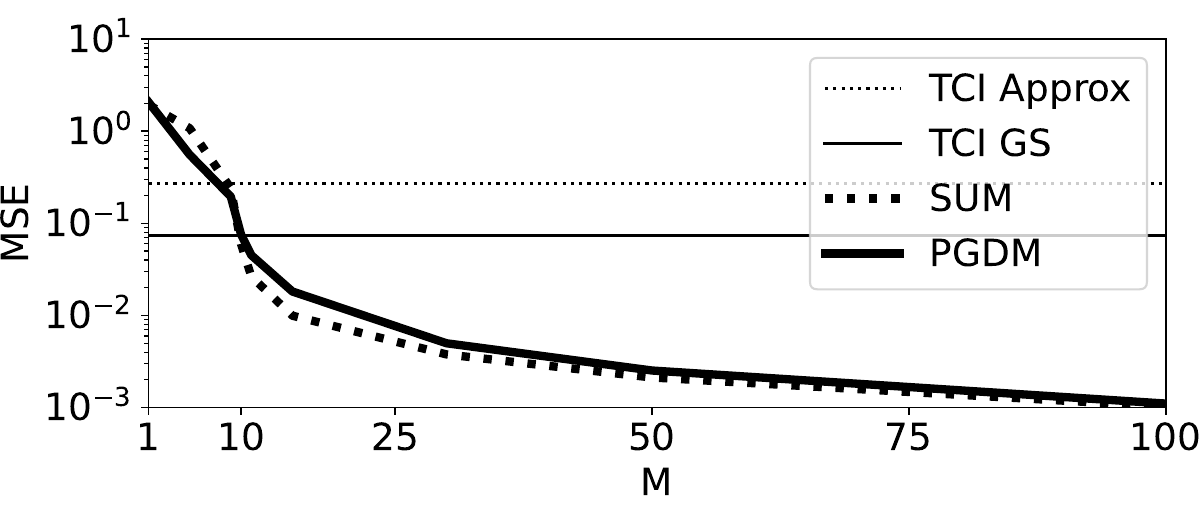}
    \end{subfigure}
    \caption{$\mse(\eta, \gamma)$ vs $M$. The upper figure has $P=1$, the lower $P=100$. $K=10, \sigma^2=1, \Sigma=-K$.}
    \label{fig:MSE_antennas}
\end{figure}

\begin{figure}[t!]
  \centering
  \includegraphics[width=0.85\linewidth]{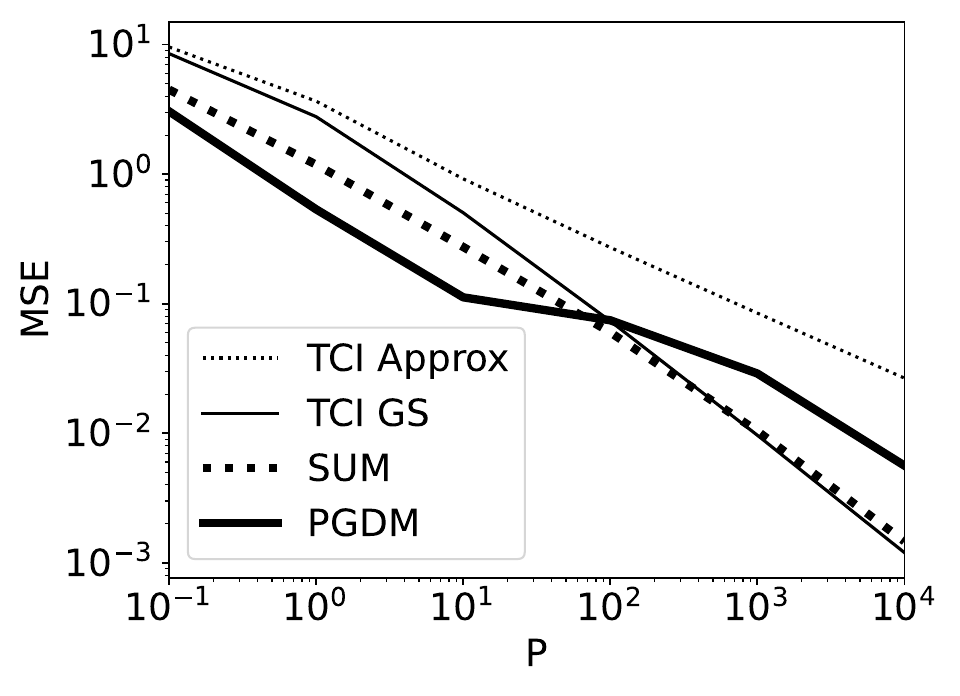}
  \caption{$\mse(\eta, \gamma)$ vs $P$. $M=10, K=10, \sigma^2=1, \Sigma=-K$.}
  \label{fig:MSE_power}
\end{figure}
\begin{figure}[t!]
  \centering
  \includegraphics[width=0.85\linewidth]{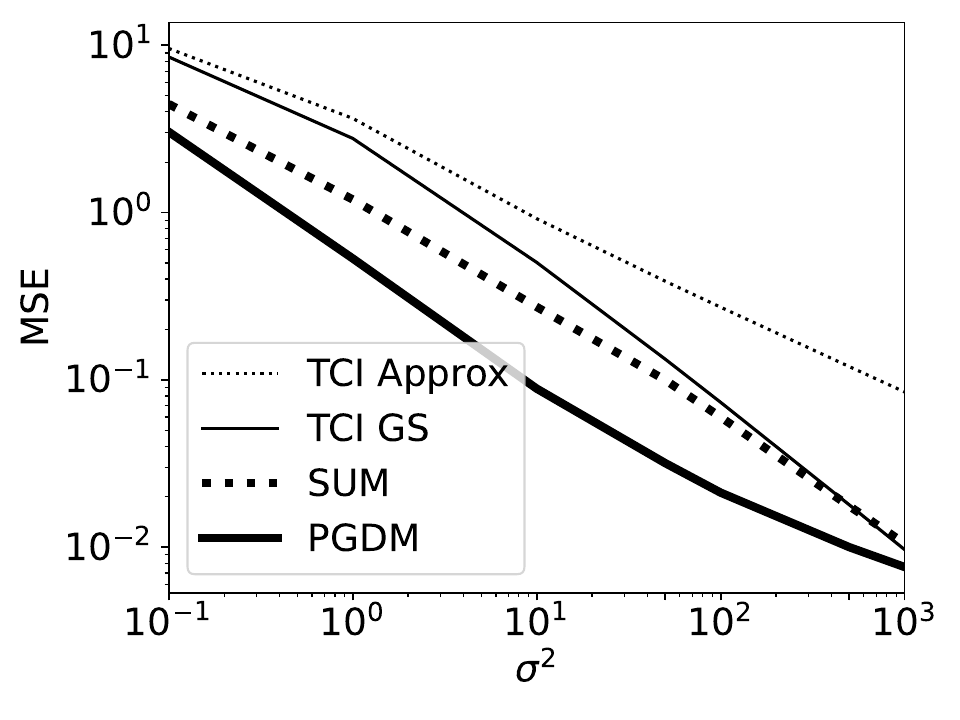}
  \caption{$\mse(\eta, \gamma)$ vs $\sigma^2$. $M=10, P=1, K=10, \Sigma=-K$.}
  \label{fig:MSE_sigma2}
\end{figure}
\begin{figure}[t!]
  \centering
  \includegraphics[width=0.85\linewidth]{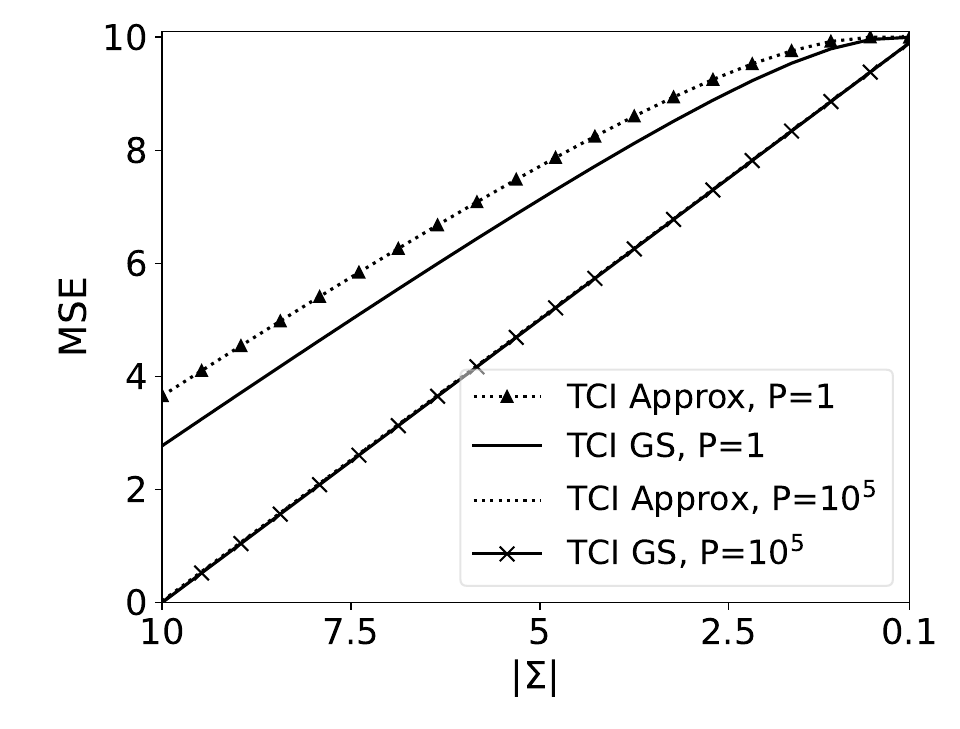}
  \caption{$\mse(\eta, \gamma)$ vs $\Sigma<0$. $M=1, \sigma^2=1, K=10$.}
  \label{fig:MSE_SIGMA}
\end{figure}

\section{Conclusion}
A TCI scheme for OAC was investigated, where channel estimates are only available at transmitters. We derived a closed-form and numerical solver to optimize our TCI scheme, which demonstrated competitive performance to references with globally available error-free channel estimates. We showed that increasing the number of antennas gives no benefit to the proposed TCI scheme at its optima. In an extended version we will look for a precoder and decoder where the MSE is improved by an increased number of antennas, for example by using non-linear processing. Furthermore, we plan to look at the analytical performance of the closed-form solution and generalize the system model to heterogeneous channel powers. 

\bibliographystyle{IEEEtran}
\bibliography{bib}
\end{document}